\DeclareMathOperator{\const}{const}
\theoremstyle{plain}
\newtheorem{proposition}{Proposition}
\theoremstyle{remark}
\newtheorem{remark}{Remark}
\title{Painlev\'e type reductions for the non-Abelian Volterra lattices}
\date{18 November 2020}
\author{V.E. Adler\thanks{L.D.~Landau Institute for Theoretical Physics, Chernogolovka, Russian Federation. E-mail:~adler@itp.ac.ru}}
\newcommand{\tp}{^{\mbox{\rm\tiny T}}}
\begin{document}
\maketitle

\rightline{\em To the memory of A.B.\,Shabat and R.I.\,Yamilov}
\medskip

\begin{abstract}
The Volterra lattice admits two non-Abelian analogs that preserve the integrability property. For each of them, the stationary equation for non-autonomous symmetries defines a constraint that is consistent with the lattice and leads to Painlev\'e-type equations. In the case of symmetries of low order, including the scaling and master-symmetry, this constraint can be reduced to second order equations. This gives rise to two non-Abelian generalizations for the discrete Painlev\'e equations dP$_1$ and dP$_{34}$ and for the continuous Painlev\'e equations P$_3$, P$_4$ and P$_5$. 
\medskip

\noindent Keywords: non-Abelian system, symmetry, constraint, Volterra lattice, Painlev\'e equation, isomonodromic deformation, quasi-determinant.
\end{abstract}

\section{Introduction}

We study equations of Painlev\'e type that arise as symmetry constraints for the non-Abelian Volterra lattices
\begin{equation}\label{VL1}\tag*{\mbox{VL$^1$}}
 u_{n,x}=u_{n+1}u_n-u_nu_{n-1}
\end{equation}
and
\begin{equation}\label{VL2}\tag*{\mbox{VL$^2$}}
 u_{n,x}=u\tp_{n+1}u_n-u_nu\tp_{n-1}
\end{equation}
where $u$ and $u\tp$ can be viewed as a matrix of any fixed size and its transpose. Up to the author's knowledge, equation \ref{VL1} appeared in \cite{Salle_1982} and equation \ref{VL2} is new. There exists a sequence of substitutions that links these two equations, as described in Section \ref{s:mod}, but this is an implicit transformation and we prefer to treat both equations independently, albeit parallel to each other.

Both equations VL$^{1,2}$ are integrable in the sense of existence of infinite hierarchy of higher symmetries and conservation laws. In this paper, these properties are almost never used, since we consider only low-dimensional reductions associated with symmetries containing $u_{n+k}$ for $|k|\le2$. These symmetries are listed in Section \ref{s:sym} and include: the lattice itself, its simplest higher symmetry, the classical scaling symmetry and the master-symmetry. The stationary equation for a linear combination of these flows determines a constraint which is consistent with the lattice (there are several non-equivalent cases depending on the choice of coefficients of the linear combination). This constraint admits a reduction of the order; as a result, a discrete equation of the Painlev\'e type arises, while the dynamics in $x$ is reduced to a continuous Painlev\'e equation. The theory of non-Abelian Painlev\'e equations appeared not very long ago (we mention \cite{Balandin_Sokolov_1998} as one of the first publications on this topic). Now this area is actively developing and includes many studied examples, but there are still many blank spots and a classification of non-Abelian case is lacking.

Section \ref{s:zcr} contains zero curvature representations for the flows we need; under the constraint, they turn into the isomonodromic Lax pairs for the Painlev\'e equations.

In the scalar case $u_n\in\mathbb C$, the first example of a constraint of this type, leading to the dP$_1$ and P$_4$ equations, appeared in \cite{Its_Kitaev_Fokas_1990, Fokas_Its_Kitaev_1991}, see also \cite{Grammaticos_Ramani_1998, Grammaticos_Ramani_2014}. In this example the master-symmetry is not used and the non-isospectrality of the reduction is due to the scaling symmetry. The non-Abelian version of this reduction for both equations VL$^{1,2}$ is given in Section \ref{s:constr1}. The calculations here are quite straightforward and lead to two non-commutative analogs of dP$_1$. Notice that these analogs do not coincide with equations studied in \cite{Gordoa_Pickering_Zhu_2013, Cassatella-Contra_Manas_2012, Cassatella-Contra_Manas_Tempesta_2018} in the framework of other approaches. The respective analogs of P$_4$ equation for $y=u_n$ can be written as
\[
 y_{xx}=\frac{1}{2}y_xy^{-1}y_x+\frac{1}{2}[ky-\gamma y^{-1},y_x]+\frac{3}{2}y^3+4xy^2+2(x^2-\alpha)y-\frac{\gamma^2}{2}y^{-1},
\]
where the coefficient $k$ takes two values: 1 for \ref{VL1} and $-3$ for \ref{VL2}.

Section \ref{s:constr2} is devoted to more difficult constraints related with master-symmetries of VL$^{1,2}$. In the scalar case, this reduction was studied in \cite{Adler_Shabat_2019} where it was shown that it leads to equations dP$_{34}$ and P$_5$ or P$_3$ (depending on the values of parameters). Now we present non-Abelian generalizations for these equations. However, the continuous part of the answer is left in the form of a system of two first-order ODEs, since reduction to one second-order ODE leads to rather complicated formulas.

An essential difference in the non-Abelian case is related with the procedure of reduction of order. We start from a stationary equation which is some 5-point O$\Delta$E:
\[
 f_n(u_{n-2},u_{n-1},u_n,u_{n+1},u_{n+2};x,\mu,\nu)=0,
\]
where $f_n$ is a polynomial in $u_k$ and $u\tp_k$ with coefficients depending on $n,x$ and scalar parameters $\mu,\nu$, and our final result is a discrete Painlev\'e equation of the form
\[
 g_n(u_{n-1},u_n,u_{n+1};x,\mu,\nu,\varepsilon,\delta)=0
\]
with additional constants $\varepsilon,\delta\in\mathbb C$. In the scalar case, both equations are completely equivalent since the reduction of order is made due to the first integrals which enter the equation as parameters. In the non-Abelian case, we use {\em partial} first integrals instead. As a result, the obtained Painlev\'e equation defines only a subclass of special solutions of the original equation (two of four matrix initial data are replaced by values of scalar parameters).
\smallskip

{\em Notations.} We use Greek letters to denote scalar coefficients, that is, elements of $\mathbb C$. By default, Latin letters stand for elements of an associative algebra ${\cal A}$ over $\mathbb{C}$, with the exception of the independent variables $n,x$ and $t_i$ (used for the flows from the lattice hierarchies) which, of course, are scalars. The involution $u\tp$ on $\cal A$ is such that $(u\tp)\tp=u$, $(\alpha u+\beta v)\tp=\alpha u\tp+\beta v\tp$ and $(uv)\tp=v\tp u\tp$ for any $u,v\in \cal A$ and $\alpha,\beta\in\mathbb C$. We also assume that the algebra ${\cal A}$ has the unity 1 and that expressions involving inverse elements are allowed. In expressions like $u_n+\alpha$, the term $\alpha$ is understood as the scalar $\alpha$ multiplied by $1\in\cal A$, which does not lead to a confusion. For clarity, we can think of ${\cal A}$ as the algebra of matrices of arbitrary fixed size with the operations of matrix transpose and taking the inverse (it is also possible to use the conjugate transpose $u^*$, assuming that some part of scalars should be real).

\section{Modified equations}\label{s:mod}

Although \ref{VL1} and \ref{VL2} equations are very similar, there is no an explicit invertible change between them. It is clear that it is possible to get rid of the involution in \ref{VL2} by passing to the sequence of variables $\dots,u\tp_{n-1},u_n,u\tp_{n+1},u_{n+2},\dots$, on the expense that equations for variables with even and odd numbers become different. Then \ref{VL1} takes the form of two-component system of Toda type
\[
 p_{n,x}=q_np_n-p_nq_{n-1},~~ q_{n,x}=p_{n+1}q_n-q_np_n\quad (u_{2n}=p_n,~ u_{2n+1}=q_n),
\]
and \ref{VL2} turns into another system
\[
 p_{n,x}=q_np_n-p_nq_{n-1},~~ q_{n,x}=q_np_{n+1}-p_nq_n\quad (u_{2n}=p_n,~ u\tp_{2n+1}=q_n).
\]
Nevertheless, both lattices are related indeed, but in more complicated and unexpected way via the sequence of difference substitutions
\[
 \ref*{VL1}~ \leftarrow~ \mbox{mVL}^1~ \leftarrow~ \mbox{pot-mVL}~ \to~ \mbox{mVL}^2~ \to~ \ref*{VL2},
\]
which act in opposite directions and involve inverse elements in ${\cal A}$. The equations participating in this sequence are
\begin{align*}
 \mbox{VL$^1$}:&\quad u_{n,x}= u_{n+1}u_n-u_nu_{n-1},\\
\mbox{mVL$^1$}:&\quad v_{n,x}= v_{n+1}(v^2_n-\alpha^2)-(v^2_n-\alpha^2)v_{n-1},\\
\mbox{pot-mVL}:&\quad w_{n,x}= (w_{n+1}+2\alpha w_n)(w^{-1}_{n-1}w_n+2\alpha),\\
\mbox{mVL$^2$}:&\quad v_{n,x}= (v_n-\alpha)v_{n+1}(v_n+\alpha)-(v_n+\alpha)v_{n-1}(v_n-\alpha),\\
 \mbox{VL$^2$}:&\quad u_{n,x}= u\tp_{n+1}u_n-u_nu\tp_{n-1}.
\end{align*}
The following substitutions are easily verified by direct computation. The VL$^1$ is related with the modified Volterra lattice mVL$^1$ by the well-known discrete Miura map \cite{Salle_1982, Bogoyavlensky_1991a, Casati_Wang_2020} 
\[
 \mbox{VL}^1\leftarrow\mbox{mVL}^1:~~ u_n=(v_{n+1}+\alpha)(v_n-\alpha).
\]
The substitution from the potential lattice pot-mVL to mVL$^1$ is of the form
\[
 \mbox{mVL}^1\leftarrow\mbox{pot-mVL}:~~ v_n=w_{n+1}w^{-1}_n+\alpha,
\]
and the substitution to mVL$^2$ differs only in the order of the factors:
\[
 \mbox{pot-mVL}\to\mbox{mVL}^2:~~ v_n=w^{-1}_nw_{n+1}+\alpha.
\]
Finally, the last substitution is
\[
 \mbox{mVL}^2\to\ref{VL2}:~~ u_{2n}=(v_{2n}+\alpha)(v_{2n-1}+\alpha),~~ 
 u_{2n+1}=(v\tp_{2n+1}-\alpha)(v\tp_{2n}-\alpha).
\]
The mVL$^2$ equation was studied (for zero parameter $\alpha$) in \cite{Adler_Svinolupov_Yamilov_1999}, where it was stated that it did not admit an analog of the Miura map. We see that this is not so: the trick is that the transformation should be defined by different formulas for even and odd variables and using the involution $\tp$, although the mVL$^2$ equation itself does not contain it.

In principle, the above substitutions can be applied to the reductions of equations \ref{VL1} and \ref{VL2} which we study below. However, this is not easy given the implicit nature of this transformation. Therefore, we leave the problem of studying these relations for further research, and in this work we consider both lattices and the Painlev\'e equations obtained from them independently, although parallel to each other.

\begin{remark}
It is interesting to note the incomplete analogy with the sequence of substitutions
\[ 
 \mbox{KdV}\xleftarrow{u=v^2\pm v_x+\alpha}
 \mbox{mKdV}^1\xleftarrow{v= w_xw^{-1}}
 \mbox{pot-mKdV}\xrightarrow{v= w^{-1}w_x}
 \mbox{mKdV}^2
\]
(see eg. \cite{Kupershmidt_2000}) for the equations
\begin{align*}
     \mbox{KdV}:&\quad u_t= u_{xxx}-3uu_x-3u_xu,\\
\mbox{mKdV$^1$}:&\quad v_t= v_{xxx}-3v^2v_x-3v_xv^2-6\alpha v_x,\\
\mbox{pot-mKdV}:&\quad w_t= w_{xxx}-3w_{xx}w^{-1}w_x-6\alpha w_x,\\
\mbox{mKdV$^2$}:&\quad v_t= v_{xxx}+3[v,v_{xx}]-6vv_xv-6\alpha v_x.
\end{align*}
These equations can be obtained from the corresponding lattice equations by continuous limit, but no continuous analog of \ref{VL2} is known.
\end{remark}

\section{Symmetries}\label{s:sym}

For each lattice \ref{VL1} or \ref{VL2}, the evolution derivation $D_x=D_{t_1}$ is a member of infinite-dimensional Lie algebra with generators satisfying commutation relations
\begin{equation}\label{DD}
 [D_{t_i},D_{t_j}]=0,~~ [D_{\tau_i},D_{t_j}]=jD_{t_{j+i-1}},~~ [D_{\tau_i},D_{\tau_j}]=(j-i)D_{\tau_{j+i-1}},
 \quad i,j\ge1.
\end{equation}
This implies, in particular, that all derivations $D_{t_i}$ and $xD_{t_i}+D_{\tau_i}$ commute with $D_x$ and that the stationary equation for any linear combination of these derivations defines a constraint consistent with the lattice. The Painlev\'e equations or their higher analogs appear if such a constraint involves at least one derivation of the form $xD_{t_i}+D_{\tau_i}$. In this paper, we consider only low order constraints obtained by use of derivations $D_x$, $D_{t_2}$, $D_{\tau_1}$ and $D_{\tau_2}$ which we now write down explicitly. The relations (\ref{DD}) for these flows can be verified directly.

For \ref{VL1}, the symmetry $D_{t_2}$ is defined by equation
\begin{equation}\label{VL1.t2}
 u_{n,t_2}=(u_{n+2}u_{n+1}+u^2_{n+1}+u_{n+1}u_n)u_n-u_n(u_nu_{n-1}+u^2_{n-1}+u_{n-1}u_{n-2})
\end{equation}
and its analog for \ref{VL2} reads
\begin{equation}\label{VL2.t2}
 u_{n,t_2}= (u\tp_{n+1}u_{n+2} +(u\tp_{n+1})^2 +u_nu\tp_{n+1})u_n
  -u_n(u\tp_{n-1}u_n +(u\tp_{n-1})^2 +u_{n-2}u\tp_{n-1}).
\end{equation}
The classical scaling symmetry $D_{\tau_1}$ for both lattices is of the form
\[
 u_{n,\tau_1}=u_n.
\]
The master-symmetry $D_{\tau_2}$ for \ref{VL1} involves an additional nonlocal variable $s_n$:
\begin{equation}\label{VL1.tau2}
 u_{n,\tau_2} = \bigl(n+\tfrac{3}{2}\bigr)u_{n+1}u_n+u^2_n-\bigl(n-\tfrac{3}{2}\bigr)u_nu_{n-1}+[s_n,u_n],\quad s_n-s_{n-1}=u_n.
\end{equation}
Notice, that the derivatives of $s_n$ with respect to $x$ and $t_2$ are local, for instance $s_{n,x}=u_{n+1}u_n$. It is clear that in the scalar case $u_n\in\mathbb C$ introducing of this nonlocality is not necessary (a description of local master-symmetries for the Volterra type lattices can be found, e.g. in \cite{Cherdantsev_Yamilov_1995, Adler_Shabat_Yamilov_2000}). As it turns out, for the lattice \ref{VL2} the nonlocality is not necessary as well: the master-symmetry in this case is of the form
\begin{equation}\label{VL2.tau2}
 u_{n,\tau_2} = \bigl(n+\tfrac{3}{2}\bigr)u\tp_{n+1}u_n+u^2_n-\bigl(n-\tfrac{3}{2}\bigr)u_nu\tp_{n-1}.
\end{equation}
The relations (\ref{DD}) imply that the derivations of the form
\begin{equation}\label{Dt}
 D_t=\mu_1(xD_{t_2}+D_{\tau_2})+\mu_2(xD_x+D_{\tau_1})+\mu_3D_{t_2}+\mu_4D_x
\end{equation}
commute with $D_x$. Hence it follows that the stationary equation $u_{n,t}=0$ is a constraint compatible with the lattice. Taking into account the scaling transformation and the shift $x\to x-x_0$, one can see that there are two nonequivalent cases leading to the Painlev\'e type equations:
\begin{equation}\label{constr1}
 u_{n,t_2}+2(xu_{n,x}+u_{n,\tau_1})=0
\end{equation}
and
\begin{equation}\label{constr2}
 xu_{n,t_2}+u_{n,\tau_2}-2\mu(xu_{n,x}+u_{n,\tau_1})-\nu u_{n,x}=0.
\end{equation}
These are the constraints that will be the subject of our further study.

\begin{remark}
In addition to the above symmetries, there are symmetries of the form $u_{n,y}=[a,u_n]$ which are specific to the non-Abelian case. For the lattice \ref{VL1}, $a$ can be any constant element of $\cal A$, and for \ref{VL2} it should satisfy $a=-a\tp$. Introducing such term into (\ref{Dt}) may be interesting because it violates the $GL$-invariance, but we do not consider this possibility in this paper.
\end{remark}

\begin{remark}
By use of \ref{VL1} or \ref{VL2} equation, one can find all variables $u_{n+k}$ as expressions of variables $u_n,u_{n+1}$ and their derivatives with respect to $x$. Therefore, any symmetry of the lattice can be rewritten as some coupled PDE system. In particular, for the \ref{VL1} equation and its symmetry (\ref{VL1.t2}), the pair $(p,q)=(u_n,u_{n+1})$ satisfies, for any $n$, the associated system
\[
 \left\{\begin{aligned}
 & q_{t_2}=~ q_{xx}+2q_xq+2(qp)_x+2[qp,q],\\
 & p_{t_2}= -p_{xx}+2pp_x+2(qp)_x+2[qp,p].
 \end{aligned}\right.
\]
Similarly, for \ref{VL2} and its symmetry (\ref{VL2.t2}), the pair $(p,q)=(u_n,u\tp_{n+1})$ satisfies the system
\[
 \left\{\begin{aligned}
 & q_{t_2}=~ q_{xx}+2q_xq+2(pq)_x+2[pq,q],\\
 & p_{t_2}= -p_{xx}+2p_xp+2(qp)_x+2[p,qp].
 \end{aligned}\right.
\]
These systems were found in \cite{Adler_Sokolov_2020} when classifying non-Abelian systems with higher conservation laws. They give two generalizations of the Levi scalar system from \cite{Levi_1981}, where the concept of the associated system was introduced; see also \cite{Shabat_Yamilov_1991} for a systematic study of relations between (scalar) Volterra and Toda-type lattices and nonlinear Schr\"odinger type equations.
\end{remark}

\section{Zero curvature representations}\label{s:zcr}

{\em \ref{VL1} equation.} It is easy to check that it is equivalent to the matrix equation
\begin{equation}\label{VL1.Lx}
 L_{n,x}=U_{n+1}L_n-L_nU_n
\end{equation}
with
\begin{equation}\label{VL1.LU}
 L_n=\begin{pmatrix}
    \lambda & \lambda u_n \\
    -1 & 0
 \end{pmatrix},\quad
 U_n=\begin{pmatrix}
  \lambda+u_n & \lambda u_n \\
  -1 & u_{n-1}
 \end{pmatrix}.
\end{equation}
The entire \ref{VL1} hierarchy is associated with linear equations of the general form 
\[
 \Psi_{n+1}=L_n\Psi_n,\quad \Psi_{n,t}+\kappa(\lambda)\Psi_{n,\lambda}=V_n\Psi_n
\]
with the above matrix $L_n$. The compatibility condition for these equations is written as the zero curvature representation
\begin{equation}\label{VL1.Lt}
 L_{n,t}+\kappa(\lambda)L_{n,\lambda}=V_{n+1}L_n-L_nV_n.
\end{equation}
Any member of the hierarchy is equivalent to (\ref{VL1.Lt}) for suitable choice of $V_n=V^{(t)}_n$, with factors $\kappa^{(t_j)}=0$ for the derivations $D_{t_j}$ and $\kappa^{(\tau_j)}=\lambda^j$ for $D_{\tau_j}$. It is easy to prove that the matrix $V_n$ has the structure
\[
 V_n=\begin{pmatrix}
   a_n & -\lambda c_{n+1}u_n \\
   c_n & \lambda c_n+a_{n-1}
  \end{pmatrix}.
\]  
In particular, the choice $a_n=\lambda+u_n$ and $c_n=-1$ brings to the matrix $U_n=V^{(x)}_n$ from (\ref{VL1.LU}). The matrices for other flows we need are obtained with the following choice:
\begin{align*}
 V^{(t_2)}_n:&~~ a_n=\lambda^2+\lambda u_n+u_{n+1}u_n+u^2_n+u_nu_{n-1},~~ c_n= -\lambda-u_n-u_{n-1},\\
 V^{(\tau_1)}_n:&~~ a_n=n+1,~~ c_n=0,\\
 V^{(\tau_2)}_n:&~~ a_n= (n-\tfrac{1}{2})(\lambda+u_n)+s_n,~~ c_n= -n+\tfrac{1}{2}.
\end{align*}
Notice, that $V^{(\tau_2)}_n$ contains nonlocality $s_n$ even in the scalar case, when the master-symmetry (\ref{VL1.tau2}) becomes local.\medskip

{\em \ref{VL2} equation.} In this case, the zero curvature representation has a less familiar form
\begin{equation}\label{VL2.Lx}
 L_{n,x}=U_{n+1}L_n+L_nU\tp_n,
\end{equation}
with the matrices
\begin{equation}\label{VL2.LU}
 L_n=\begin{pmatrix}
   1 & -\lambda\\
   0 & \lambda u_n
  \end{pmatrix},\quad
 U_n=\begin{pmatrix}
  \tfrac{1}{2}\lambda & 1\\
  -\lambda u_{n-1} & -\tfrac{1}{2}\lambda-u_{n-1}+u\tp_n
 \end{pmatrix},
\end{equation}
where $U\tp$ denotes the result of the matrix transpose of $U$ applying the involution to each element (that is, if $\cal A$ is the matrix algebra then this is just the block transpose). Representations of this type arise if the linear equations for $\psi$-functions are given differently for even and odd numbers (cf. with the alternating Miura map from Section \ref{s:mod}):
\begin{gather*}
 \Psi_{2n+1}=L_{2n}\Psi_{2n}=L\tp_{2n+1}\Psi_{2n+2},\\
 \Psi_{2n,x}=-U\tp_{2n}\Psi_{2n},\quad \Psi_{2n+1,x}=U_{2n+1}\Psi_{2n+1}.
\end{gather*}
It is easy to see that the compatibility conditions for these equations have the same form (\ref{VL2.Lx}) regardless of the parity of $n$. Note that from here we can pass to the representation of the form $M_{2n,x}=A_{2n+2}M_{2n}-M_{2n}A_{2n}$ with matrices $M_{2n}=(L\tp_{2n+1})^{-1}L_{2n}$. Representations of this type (with the step by two lattice sites) were used for some equations of Volterra type also in the scalar situation \cite{Garifullin_Mikhailov_Yamilov_2014}. 

More generally, all derivations from the \ref{VL2} hierarchy correspond to the linear equations
\[
 \Psi_{2n,t}+\kappa(\lambda)\Psi_{2n,\lambda}=-V\tp_{2n}\Psi_{2n},\quad 
 \Psi_{2n+1,t}+\kappa(\lambda)\Psi_{2n+1,\lambda}=V_{2n+1}\Psi_{2n+1},
\]
with the same matrix $L_n$, which leads to representations of the form
\begin{equation}\label{VL2.Lt}
 L_{n,t}+\kappa(\lambda)L_{n,\lambda}=V_{n+1}L_n+L_nV\tp_n,
\end{equation}
with $\kappa^{(t_j)}=0$ and $\kappa^{(\tau_j)}=\lambda^j$ as before. In particular, we are interested in the derivations $D_{t_2}$, $D_{\tau_1}$ and $D_{\tau_2}$ which correspond to the matrices
\[
 V^{(t_2)}_n=\begin{pmatrix}
  \tfrac{1}{2}\lambda^2+\lambda u_{n-1} & \lambda+a\tp_n\\
  -\lambda u_{n-1}(\lambda+a_{n-1}) & 
  -\tfrac{1}{2}\lambda^2-u_{n-1}(\lambda+a_{n-1})+u\tp_na_{n+1} 
  \end{pmatrix},
\]
where $a_n=u_n+u\tp_{n-1}$, and
\[
 V^{(\tau_1)}_n=\begin{pmatrix}
  0 & 0\\
  0 & 1 
 \end{pmatrix},\quad
 V^{(\tau_2)}_n=nU_n+\frac{1}{2}\begin{pmatrix}
  -\lambda & -1\\
  3\lambda u_{n-1} & 2\lambda+3u_{n-1}+u\tp_n
 \end{pmatrix}.
\]

{\em Transition to the isomonodromic Lax pairs.} The matrix $V_n$ and the factor $\kappa(\lambda)$ corresponding to the linear combination (\ref{Dt}) are constructed from matrices and factors corresponding to the basic derivations:
\[
 V_n=\mu_1(xV^{(t_2)}+V^{(\tau_2)}_n)+\mu_2(xU_n+V^{(\tau_1)}_n)+\mu_3V^{(t_2)}_n+\mu_4U_n,\quad
 \kappa=\mu_1\lambda^2+\mu_2\lambda.
\] 
This matrix satisfies equation (\ref{VL1.Lt}) or (\ref{VL2.Lt}), depending on which hierarchy we are considering. In addition, in both cases the compatibility condition for the derivations $D_x$ and $D_t$ is satisfied:
\begin{equation}\label{Ut}
 U_{n,t}+\kappa(\lambda)U_{n,\lambda}-V_{n,x}=[V_n,U_n].
\end{equation}
On passing to the stationary equation for (\ref{Dt}), the derivative with respect to $t$ disappears and the zero curvature representations (\ref{VL1.Lt}), (\ref{VL2.Lt}) and (\ref{Ut}) turn into isomonodromic Lax pairs. The discrete part of the constraint is equivalent to an equation of the form
\begin{equation}\label{VL1.Llambda}
 \kappa(\lambda)L_{n,\lambda}=V_{n+1}L_n-L_nV_n
\end{equation}
or of the form
\begin{equation}\label{VL2.Llambda}
 \kappa(\lambda)L_{n,\lambda}=V_{n+1}L_n+L_nV\tp_n,
\end{equation}
and the continuous part is equivalent to an equation of the form
\begin{equation}\label{Ulambda}
 \kappa(\lambda)U_{n,\lambda}-V_{n,x}=[V_n,U_n]
\end{equation}
with the matrix entries simplified modulo the constraint under study.

\section{Non-Abelian analogs of \texorpdfstring{dP$_1$}{dP1} and \texorpdfstring{P$_4$}{P4} equations}\label{s:constr1}

For the \ref{VL1} equation, the constraint (\ref{constr1}) takes the form of difference equation
\begin{multline*}
 u_{n+2}u_{n+1}u_n+u^2_{n+1}u_n+u_{n+1}u^2_n-u^2_nu_{n-1}-u_nu^2_{n-1}-u_nu_{n-1}u_{n-2}\\
  +2x(u_{n+1}u_n-u_nu_{n-1})+2u_n=0,\quad
\end{multline*}
and for the \ref{VL2} it reads
\begin{multline*}
 u\tp_{n+1}u_{n+2}u_n +(u\tp_{n+1})^2u_n +u_nu\tp_{n+1}u_n
   -u_nu\tp_{n-1}u_n -u_n(u\tp_{n-1})^2 -u_nu_{n-2}u\tp_{n-1} \\
 +2x(u\tp_{n+1}u_n-u_nu\tp_{n-1})+2u_n=0.\quad
\end{multline*}
Each of these equations can be represented as $F_{n+1}u_n-u_nF_{n-1}=0$, and it turns out that the equality $F_n=0$ is itself a constraint consistent with the lattice equation. This brings to the following non-Abelian analogs of dP$_1$ (we remark that different non-Abelian versions of dP$_1$ were studied in \cite{Cassatella-Contra_Manas_2012, Cassatella-Contra_Manas_Tempesta_2018}):
\begin{gather}
\label{dP11}\tag*{\mbox{dP$_1^1$}}
 u_{n+1}u_n+u^2_n+u_nu_{n-1}+2xu_n+n-\nu+(-1)^n\varepsilon=0,\\
\label{dP12}\tag*{\mbox{dP$_1^2$}}
 u\tp_{n+1}u_n+u^2_n+u_nu\tp_{n-1}+2xu_n+n-\nu+(-1)^n\varepsilon=0,
\end{gather}
where scalars $\nu$ and $\varepsilon$ play the role of integration constants. (This is only a partial first integral. The integration constants must belong to the center of $\cal A$, because they have to commute with $u_n$ which is a general element of $\cal A$.) Moreover, for consistency with the corresponding lattice, these scalars must be independent of $x$.

\begin{proposition}
The VL$^i$ equation is consistent with dP$_1^i$, $i=1,2$, for arbitrary constants $\nu,\varepsilon\in\mathbb C$.
\end{proposition}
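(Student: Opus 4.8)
The plan is to prove consistency in its operational form: writing $F_n$ for the left-hand side of dP$_1^i$, I will show that $D_xF_n$ belongs to the two-sided ideal generated by the family $\{F_m\}_{m\in\mathbb Z}$, so that the locus where $F_n=0$ for all $n$ is preserved by the VL$^i$ flow. Two features of the scalar tail $c_n:=n-\nu+(-1)^n\varepsilon$ will be used throughout: it is central and $x$-independent, so that $D_xc_n=0$; and it satisfies $c_{n+1}-c_{n-1}=2$ (indeed $c_n$ is the general scalar solution of this recursion, which is where the two free constants $\nu,\varepsilon$ come from).

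First I would differentiate $F_n$ monomial by monomial and substitute the lattice together with its nearest shifts, using $u_{n,x}=u_{n+1}u_n-u_nu_{n-1}$ for $i=1$ and, in addition, the transposed equation $u\tp_{n,x}=u\tp_nu_{n+1}-u_{n-1}u\tp_n$ for $i=2$. After collecting terms, the bulk of $D_xF_n$ reproduces the $5$-point stationary constraint, which factors as $F_{n+1}u_n-u_nF_{n-1}$ for $i=1$ and as $F\tp_{n+1}u_n-u_nF\tp_{n-1}$ for $i=2$ (the involution on the shifted factors is forced by the transposed products in VL$^2$, and is harmless since $F_m=0$ entails $F\tp_m=0$). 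In this comparison the contribution $2u_n$ coming from $D_x(2xu_n)$ is matched exactly by $(c_{n+1}-c_{n-1})u_n=2u_n$, which is precisely what pins the linear-in-$n$ part of $c_n$ while leaving $\nu,\varepsilon$ unconstrained.

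What survives is a remainder built from two commutators, namely $[u_n,u_{n+1}]u_n+u_n[u_n,u_{n-1}]$ for $i=1$ and $[u\tp_{n+1},u_n]u_n+u_n[u\tp_{n-1},u_n]$ for $i=2$. The decisive step is to recognise that each remainder is itself a commutator with $F_n$: eliminating the leading product through the defining relation $u_{n+1}u_n=F_n-u_n^2-u_nu_{n-1}-2xu_n-c_n$ (and its $i=2$ analogue $u\tp_{n+1}u_n=F_n-u_n^2-u_nu\tp_{n-1}-2xu_n-c_n$), and discarding $[c_n,u_n]$ by centrality, one obtains $[u_n,u_{n+1}]u_n+u_n[u_n,u_{n-1}]=[u_n,F_n]$ for $i=1$ and $[u\tp_{n+1},u_n]u_n+u_n[u\tp_{n-1},u_n]=[F_n,u_n]$ for $i=2$. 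The whole computation then collapses to the closed identity
\[
 D_xF_n=(F_{n+1}-F_n)u_n+u_n(F_n-F_{n-1})
\]
for $i=1$, and to
\[
 D_xF_n=(F\tp_{n+1}+F_n)u_n-u_n(F\tp_{n-1}+F_n)
\]
for $i=2$. Only $F_{n-1},F_n,F_{n+1}$ (and their transposes) enter, so both right-hand sides vanish on the constraint, which is exactly the asserted consistency; the check is in fact purely local in $n$.

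The main obstacle is this last manipulation. The commutator remainder does not vanish on its own and is not visibly a member of the constraint ideal; it becomes so only after being rewritten as $\pm[F_n,u_n]$, and this rewriting relies squarely on the integration constants $\nu,\varepsilon$ being scalar, hence central. A non-central tail would leave an uncancelled commutator and destroy consistency, which is the computational reason behind the earlier remark that these constants must lie in the center of $\cal A$. The two cases run in parallel, the only genuine differences being the sign of the commutator correction and the presence of the involution in the shifted factors $F\tp_{n\pm1}$, both of which I would track by applying the transposed lattice equation consistently throughout.
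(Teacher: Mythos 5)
Your proposal is correct and takes essentially the same route as the paper: differentiating $F_n$ in virtue of VL$^i$ and arriving at the closed identities $F_{n,x}=(F_{n+1}-F_n)u_n+u_n(F_n-F_{n-1})$ for $i=1$ and $F_{n,x}=(F\tp_{n+1}+F_n)u_n-u_n(F_n+F\tp_{n-1})$ for $i=2$, whose right-hand sides vanish on the constraint. Your intermediate bookkeeping --- splitting off the $5$-point constraint $F_{n+1}u_n-u_nF_{n-1}$ (resp.\ $F\tp_{n+1}u_n-u_nF\tp_{n-1}$) and recognising the remainder as $\pm[F_n,u_n]$ via centrality of $n-\nu+(-1)^n\varepsilon$, with $c_{n+1}-c_{n-1}=2$ pinning the linear part --- checks out and is simply an expanded account of what the paper compresses into ``direct calculation.''
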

\begin{proof} 
This is proved by direct calculation as follows. Let $F_n$ be the left-hand side of \ref{dP11}. By differentiating this expression in virtue of \ref{VL1}, we obtain the identity
\[
 F_{n,x}=(F_{n+1}-F_n)u_n+u_n(F_n-F_{n-1}).
\]
Similarly, if $F_n$ is the left-hand side of \ref{dP12} then differentiating in virtue of \ref{VL2} gives
\[
 F_{n,x}=(F\tp_{n+1}+F_n)u_n-u_n(F_n+F\tp_{n-1}).
\]
In both cases, the derivative of the constraint $F_n=0$ vanishes identically due to the constraint itself, as required.
\end{proof}

The obtained constraint turns the lattice equation into a closed system of two first order equations for the variables $u_{n-1}$ and $u_n$. It is possible to rewrite this system as a second-order ODE which generalize P$_4$. 

\begin{proposition}
If a solution of the VL$^i$ equation satisfies the constraint P$_4^i$ then any its component $y=u_n$ satisfies the P$_4^i$ equation, $i=1,2\!:$ 
\begin{align}
\label{P41}\tag*{\mbox{P$_4^1$}}
 &y_{xx}=\frac{1}{2}y_xy^{-1}y_x+\frac{1}{2}[y-\gamma y^{-1},y_x]+\frac{3}{2}y^3+4xy^2+2(x^2-\alpha)y-\frac{\gamma^2}{2}y^{-1},\\
\label{P42}\tag*{\mbox{P$_4^2$}}
 &y_{xx}=\frac{1}{2}y_xy^{-1}y_x-\frac{1}{2}[3y+\gamma y^{-1},y_x]+\frac{3}{2}y^3+4xy^2+2(x^2-\alpha)y-\frac{\gamma^2}{2}y^{-1}
\end{align}
with the values of parameters
\[
 \alpha=\gamma_{n-1}-\gamma_n/2+1,\quad \gamma=\gamma_n:=n-\nu+(-1)^n\varepsilon.
\]
\end{proposition}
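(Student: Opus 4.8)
The plan is a standard ``fold'' reduction: I will eliminate every lattice neighbour in favour of $y=u_n$ and its $x$-derivatives, using the flow VL$^i$ to trade shifts for differentiations and the discrete constraint \ref{dP11} (imposed at the sites $n$ and $n-1$) to close the resulting system. I carry this out for the VL$^1$/\ref{P41} case; the VL$^2$/\ref{P42} case runs in exact parallel, the transpose in the flow merely reshuffling the order of the factors and producing the coefficient $-3$ in place of $1$ inside the commutator.

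First I would use VL$^1$ to write $u_{n+1}u_n=y_x+u_nu_{n-1}$ and substitute this into \ref{dP11}. Writing $\gamma=\gamma_n$, the terms containing $u_{n+1}$ cancel and one is left with a single relation tying $y$, $y_x$ and the product $u_nu_{n-1}$ together:
\[
 u_nu_{n-1}=-\tfrac12\bigl(y_x+y^2+2xy+\gamma\bigr).
\]
This is the crucial bridge. It both expresses the product $u_nu_{n-1}$ and, after left multiplication by $y^{-1}$, gives the lone neighbour $u_{n-1}=y^{-1}(u_nu_{n-1})$ as a rational expression in $y$, $y^{-1}$ and $y_x$. From this point on every quantity in sight is a non-commutative rational function of $y,y^{-1},y_x$, and the inverse $y^{-1}$ that shapes the final equation makes its first appearance here.

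Next I would differentiate this bridge relation. On one side, differentiating $y_x=-2u_nu_{n-1}-y^2-2xy-\gamma$ produces $y_{xx}$ in terms of $(u_nu_{n-1})_x$. On the other side, the product rule together with the flows $u_{n,x}=u_{n+1}u_n-u_nu_{n-1}$ and $u_{n-1,x}=u_nu_{n-1}-u_{n-1}u_{n-2}$ introduces $u_{n+1}$ and $u_{n-2}$; these are removed by \ref{dP11} at the sites $n$ and $n-1$, leaving $(u_nu_{n-1})_x$ as a polynomial in $u_n,u_{n-1}$ with $x$- and $n$-dependent scalar coefficients. Substituting $u_{n-1}=y^{-1}(u_nu_{n-1})$ and then the bridge relation for $u_nu_{n-1}$ reduces everything to $y,y^{-1},y_x$, and assembling the two sides yields \ref{P41}.

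The computation is routine but thoroughly non-commutative, so the only real care lies in the ordering of the factors; I expect bookkeeping, not concept, to be the obstacle. The cubic $\tfrac32 y^3$, the quadratic $4xy^2$ and the $y_xy^{-1}y_x$ term (net coefficient $\tfrac12$) fall out directly. The pairs $y_xy$ versus $yy_x$ and $\gamma y_xy^{-1}$ versus $\gamma y^{-1}y_x$ recombine into the single commutator $\tfrac12[y-\gamma y^{-1},y_x]$, and one must check that all $xy_x$-type contributions cancel. The decisive test is the linear-in-$y$ term: its coefficient collects to $\gamma_n-2\gamma_{n-1}-2$, which is precisely $-2\alpha$ and hence pins down $\alpha=\gamma_{n-1}-\gamma_n/2+1$, while the leftover $-\tfrac{\gamma^2}{2}y^{-1}$ completes the identification with \ref{P41}.
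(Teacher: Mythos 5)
Your proposal is correct and follows essentially the same route as the paper: your ``bridge'' relation $u_nu_{n-1}=-\tfrac12\bigl(y_x+y^2+2xy+\gamma_n\bigr)$ is exactly the second equation of the paper's closed first-order system (\ref{P41.pq}) solved for the product, and differentiating it while removing $u_{n+1}$ and $u_{n-2}$ via the constraint at sites $n$ and $n-1$ is precisely how the paper's first equation $p_x=2yp+p^2+2xp+\gamma_{n-1}$ arises, after which both arguments perform the same elimination of $p=u_{n-1}$ (respectively $p=u\tp_{n-1}$ for VL$^2$). Your stated checks all verify: the $2xy_x$ contributions cancel, the residual terms recombine into $\tfrac12[y-\gamma y^{-1},y_x]$, and the linear coefficient $\gamma_n-2\gamma_{n-1}-2=-2\alpha$ pins down $\alpha=\gamma_{n-1}-\gamma_n/2+1$ as claimed.
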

\begin{proof}
In the \ref{VL1} case, we have the following system for the variables $(p,y)=(u_{n-1},u_n)$:
\begin{equation}\label{P41.pq}
 p_x= 2yp+p^2+2xp+\gamma_{n-1},\quad y_x=-y^2-2yp-2xy-\gamma_n.
\end{equation}
The second equation implies
\begin{equation}\label{py}
 p=x-\frac{1}{2}(y^{-1}y_x+y+\gamma_ny^{-1})
\end{equation}
and substitution into the first equation gives \ref{P41}. Similarly, in the \ref{VL2} case, the variables $(p,y)=(u\tp_{n-1},u_n)$ satisfy the system
\begin{equation}\label{P42.pq}
 p_x= 2py+p^2+2xp+\gamma_{n-1},\quad y_x=-y^2-2yp-2xy-\gamma_n
\end{equation}
which differs from the previous one just by one term. The elimination of $p$ brings to \ref{P42}.
\end{proof}

Notice, that the substitution (\ref{py}) (accompanied by additional involution $p\to p\tp$ in the \ref{P42} case) defines a B\"acklund transformation for equations P$_4^i$, since $u_{n-1}$ satisfies the same equations as $u_n$, up to the values of parameters.

As mentioned in Section \ref{s:mod}, the existence of substitutions between \ref{VL1} and \ref{VL2} suggests that there should be some transformation between equations \ref{P41} and \ref{P42} (which reduces to the identity transformation in the scalar case). However, its explicit form is still unknown.

Isomonodromic Lax pairs for the obtained equations are constructed according to the scheme from the previous section. We simplify the entries of the matrix $V_n=V^{(t_2)}_n+2xU_n+2V^{(\tau_1)}_n$ by use of the constraint dP$_1^i$ and arrive to the following representations.

\begin{proposition}
Equation \ref{dP11} and system (\ref{P41.pq}) for the variables $(u_{n-1},u_n)$ admit the Lax representations of the form (\ref{VL1.Llambda}) and (\ref{Ulambda}), respectively, where $\kappa(\lambda)=2\lambda$, the matrices $L_n$ and $U_n$ are of the form (\ref{VL1.LU}) and
\[
 V_n=\begin{pmatrix}
  \lambda^2+2x\lambda+\gamma_{n+1}+2\nu+1+\lambda u_n & 
  \lambda^2u_n -\lambda u_nu_{n-1}-\lambda\gamma_n\\
  -\lambda-2x-u_n-u_{n-1}& \gamma_n+2\nu+1-\lambda u_n 
 \end{pmatrix}.
\]
Equation \ref{dP12} and system (\ref{P42.pq}) for the variables $(u\tp_{n-1},u_n)$ admit the Lax representations of the form (\ref{VL2.Llambda}) and (\ref{Ulambda}), respectively, where $\kappa(\lambda)=2\lambda$, the matrices $L_n$ and $U_n$ are of the form (\ref{VL2.LU}) and
\begin{align*}
 V_n&=\frac{\lambda}{2}(\lambda+2x+2u_{n-1})\begin{pmatrix} 1 & 0 \\ 0 & -1 \end{pmatrix}\\
 &\qquad +\begin{pmatrix}
   0 & \lambda+2x+u_{n-1}+u\tp_n\\
  -\lambda^2u_{n-1}+\lambda u\tp_nu_{n-1}+\lambda\gamma_{n-1} & 
  1-2(-1)^n\varepsilon+[u\tp_n,u_{n-1}] 
 \end{pmatrix}.
\end{align*}
\end{proposition}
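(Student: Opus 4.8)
The plan is to recognise the asserted Lax pair as the restriction of the zero-curvature representations of Section~\ref{s:zcr} to the constraint surface, the only new work being the simplification of $V_n$ modulo the discrete equation. First I would fix the coefficients: the constraint~(\ref{constr1}) is the stationary equation $D_tu_n=0$ for $\mu_1=\mu_4=0$, $\mu_2=2$, $\mu_3=1$ in~(\ref{Dt}), so that $\kappa(\lambda)=2\lambda$ and the relevant matrix is the raw combination $V_n=V^{(t_2)}_n+2xU_n+2V^{(\tau_1)}_n$ recorded just before the statement. By Section~\ref{s:zcr}, this $V_n$ with $\kappa=2\lambda$ satisfies the zero-curvature identity~(\ref{VL1.Lt}) (resp.~(\ref{VL2.Lt})) and the compatibility condition~(\ref{Ut}). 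On a solution of dP$_1^i$ every $F_m$ vanishes, so the left-hand side of~(\ref{constr1}), equal to $F_{n+1}u_n-u_nF_{n-1}$, is zero; thus $u_{n,t}=0$ and $L_{n,t}=U_{n,t}=0$. Dropping these terms turns~(\ref{VL1.Lt})/(\ref{VL2.Lt}) into~(\ref{VL1.Llambda})/(\ref{VL2.Llambda}) and turns~(\ref{Ut}) into~(\ref{Ulambda}). This already produces Lax pairs for dP$_1^i$ and for~(\ref{P41.pq})/(\ref{P42.pq}), but with the unsimplified $V_n$.

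The substance is then to reduce the entries of the raw $V_n$. Writing out $V^{(t_2)}_n+2xU_n+2V^{(\tau_1)}_n$ for VL$^1$ from~(\ref{VL1.LU}) and the list in Section~\ref{s:zcr}, I would apply dP$_1^1$ at site $n$ to replace $u_{n+1}u_n+u^2_n+u_nu_{n-1}+2xu_n$ by $-\gamma_n$ in the $(1,1)$ and $(1,2)$ entries, and at site $n-1$ to replace $u_nu_{n-1}+u^2_{n-1}+u_{n-1}u_{n-2}+2xu_{n-1}$ by $-\gamma_{n-1}$ in the $(2,2)$ entry. What remains is purely scalar bookkeeping, governed by the two identities $-\gamma_n+2n+2=\gamma_{n+1}+2\nu+1$ and $-\gamma_{n-1}+2n=\gamma_n+2\nu+1$, each immediate from $\gamma_n=n-\nu+(-1)^n\varepsilon$; this reproduces the stated matrix.

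The same scheme applies to VL$^2$ from~(\ref{VL2.LU}), and I expect its $(2,2)$ entry to be the only genuinely non-Abelian point and the main obstacle. The $(2,1)$ entry is cleared using dP$_1^2$ at site $n-1$, but the $(2,2)$ entry requires in addition the \emph{transpose} of dP$_1^2$ at site $n$, namely $u\tp_nu_{n+1}+(u\tp_n)^2+u_{n-1}u\tp_n+2xu\tp_n=-\gamma_n$. After both substitutions the quadratic terms collapse to the scalar $\gamma_{n-1}-\gamma_n+2=1-2(-1)^n\varepsilon$ together with the residual $u\tp_nu_{n-1}-u_{n-1}u\tp_n=[u\tp_n,u_{n-1}]$, which does not vanish in the non-commutative setting and is precisely the commutator appearing in the answer. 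The care needed here is to keep all transposes in the correct order, since this is exactly where the scalar reduction cannot be transcribed verbatim.

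Finally I would check that the simplification is admissible. The reduced $V_n$ differs from the raw one only by entries proportional to $F_n$, $F_{n-1}$ and their transposes (times powers of $\lambda$), which vanish on the surface dP$_1^i=0$; since this surface is invariant under $D_x$ (the content of the consistency statement proved above), both these differences and their $x$-derivatives vanish on it, so~(\ref{VL1.Llambda})/(\ref{VL2.Llambda}) and~(\ref{Ulambda}) continue to hold with the reduced $V_n$. Alternatively, one may simply substitute the stated $L_n$, $U_n$, $V_n$ and $\kappa=2\lambda$ into the three matrix equations and verify them as polynomial identities in $\lambda$ modulo dP$_1^i$ and the systems~(\ref{P41.pq})/(\ref{P42.pq}).
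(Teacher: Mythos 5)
Your proposal is correct and follows essentially the same route as the paper, which offers no separate proof but states exactly this construction: take $V_n=V^{(t_2)}_n+2xU_n+2V^{(\tau_1)}_n$ with $\kappa(\lambda)=2\lambda$ from the scheme of Section~\ref{s:zcr}, pass to the stationary reduction, and simplify the entries modulo dP$_1^i$. Your explicit bookkeeping — the identities $-\gamma_n+2n+2=\gamma_{n+1}+2\nu+1$ and $-\gamma_{n-1}+2n=\gamma_n+2\nu+1$, and the use of the transposed constraint $u\tp_nu_{n+1}+(u\tp_n)^2+u_{n-1}u\tp_n+2xu\tp_n=-\gamma_n$ to produce $1-2(-1)^n\varepsilon+[u\tp_n,u_{n-1}]$ in the $(2,2)$ entry for \ref{VL2} — checks out.
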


\section{Non-Abelian analogs of equations \texorpdfstring{dP$_{34}$}{dP34}, \texorpdfstring{P$_5$}{P5} and \texorpdfstring{P$_3$}{P3}}\label{s:constr2}

For the \ref{VL1} equation, the constraint (\ref{constr2}) is equivalent to equation
\begin{gather*}
 x(u_{n+2}u_{n+1}u_n+u^2_{n+1}u_n+u_{n+1}u^2_n-u^2_nu_{n-1}-u_nu^2_{n-1}-u_nu_{n-1}u_{n-2})\\
  +\bigl(n-\nu+\tfrac{3}{2}\bigr)u_{n+1}u_n+u^2_n-\bigl(n-\nu-\tfrac{3}{2}\bigr)u_nu_{n-1}+[s_n,u_n]\\
  -2\mu x(u_{n+1}u_n-u_nu_{n-1})-2\mu u_n =0,\qquad s_n-s_{n-1}=u_n.
\end{gather*}
Like in the previous section, it is represented in the form $F_{n+1}u_n-u_nF_{n-1}=0$, where
\[
 F_n=x(u_{n+1}u_n+u^2_n+u_nu_{n-1}-2\mu u_n) +\bigl(n-\nu-\tfrac{1}{2}\bigr)u_n+s_n -\mu n-(-1)^n\varepsilon-\varepsilon_0,
\]
with arbitrary constants $\varepsilon,\varepsilon_0\in\mathbb C$, and one can check that $F_n$ satisfy the identity $F_{n,x}=(F_{n+1}-F_n)u_n+u_n(F_n-F_{n-1})$. Hence, equation $F_n=0$ defines a constraint consistent with \ref{VL1}. In contrast to \ref{dP11}, it contains the nonlocal variable $s_n$, even in the scalar case. To get rid of it, we replace this equation with $G_n=F_{n+1}-F_n=0$, which is equation (\ref{Gn1}) below. 

For the \ref{VL2} equation, the constraint (\ref{constr2}) is equivalent to
\begin{gather*}
 x\bigl(u\tp_{n+1}u_{n+2}u_n +(u\tp_{n+1})^2u_n +u_nu\tp_{n+1}u_n 
  -u_nu\tp_{n-1}u_n -u_n(u\tp_{n-1})^2 -u_nu_{n-2}u\tp_{n-1}\bigr)\\
  +\bigl(n-\nu+\tfrac{3}{2}\bigr)u\tp_{n+1}u_n+u^2_n-\bigl(n-\nu-\tfrac{3}{2}\bigr)u_nu\tp_{n-1}
  -2\mu x(u\tp_{n+1}u_n-u_nu\tp_{n-1})-2\mu u_n =0.
\end{gather*}
This equation can be represented in the form $G_nu_n+u_nG\tp_{n-1}=0$ and it turns out that the equality $G_n=0$ (equation (\ref{Gn2}) below) is consistent with the derivation defined by \ref{VL2}. 

Thus, at this stage, we have replaced the original constraint with a 4-point difference equation, for both lattices.

\begin{proposition}
For any constants $\mu,\nu,\varepsilon\in\mathbb C$, the \ref{VL1} equation is consistent with the constraint
\begin{equation}\label{Gn1}
\begin{gathered}[b]
 x(u_{n+2}u_{n+1}+u^2_{n+1}-u^2_n-u_nu_{n-1})-(2\mu x-n+\nu-\tfrac{3}{2})u_{n+1}\\
 +(2\mu x-n+\nu+\tfrac{1}{2})u_n-\mu+2(-1)^n\varepsilon=0,
\end{gathered}
\end{equation}
and the \ref{VL2} equation is consistent with the constraint
\begin{equation}\label{Gn2}
\begin{gathered}[b]
 x\bigl(u\tp_{n+1}u_{n+2}+(u\tp_{n+1})^2-u^2_n-u_nu\tp_{n-1}\bigr)-(2\mu x-n+\nu-\tfrac{3}{2})u\tp_{n+1}\\
 +(2\mu x-n+\nu+\tfrac{1}{2})u_n-\mu+2(-1)^n\varepsilon=0.
\end{gathered}
\end{equation}
\end{proposition}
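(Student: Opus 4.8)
The plan is to prove both cases by the same mechanism as the earlier propositions in Sections \ref{s:constr1} and \ref{s:constr2}: a constraint $\{G_m=0:m\in\mathbb Z\}$ is consistent with the $x$-flow exactly when that flow is tangent to the corresponding surface, i.e. when each $G_{n,x}$, computed in virtue of the lattice, lies among the left and right $u$-multiples of nearby copies $G_m$ (and, for \ref{VL2}, their transposes $G\tp_m$). I would therefore aim to produce, for each lattice, an explicit identity expressing $G_{n,x}$ as such a combination; consistency is then immediate, because $G_m=0$ for all $m$ (together with $G\tp_m=0$, obtained by applying the involution) makes the right-hand side vanish.

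For \ref{VL1} I would exploit the factorisation already set up in the text. There the five-point constraint (\ref{constr2}) is written as $F_{n+1}u_n-u_nF_{n-1}=0$ for the nonlocal expression $F_n$, and $F_n$ is known to satisfy
\[
 F_{n,x}=(F_{n+1}-F_n)u_n+u_n(F_n-F_{n-1}).
\]
Since (\ref{Gn1}) is precisely $G_n=F_{n+1}-F_n$ (the nonlocal term disappears because $s_{n+1}-s_n=u_{n+1}$ is local, and the constant $\varepsilon_0$ cancels), I would simply differentiate this telescoping relation: writing $F_{m+1}-F_m=G_m$ in the identity above gives $F_{n,x}=G_nu_n+u_nG_{n-1}$, hence
\[
 G_{n,x}=F_{n+1,x}-F_{n,x}=G_{n+1}u_{n+1}+u_{n+1}G_n-G_nu_n-u_nG_{n-1}.
\]
This vanishes on $\{G_m=0\}$, so the \ref{VL1} case follows with essentially no computation beyond the already-verified identity for $F_n$.

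For \ref{VL2} there is no intermediate nonlocal potential—the master-symmetry (\ref{VL2.tau2}) is already local—so the constraint (\ref{constr2}) is presented directly as $G_nu_n+u_nG\tp_{n-1}=0$ with $G_n$ given by (\ref{Gn2}). Here I would differentiate $G_n$ directly in virtue of \ref{VL2}, using $u_{n,x}=u\tp_{n+1}u_n-u_nu\tp_{n-1}$ together with its involution $u\tp_{n,x}=u\tp_nu_{n+1}-u_{n-1}u\tp_n$. After expanding and regrouping I expect an identity of the shape
\[
 G_{n,x}=(\text{left $u$-multiples})+(\text{right $u$-multiples of transposed terms}),
\]
paralleling the relation $F_{n,x}=(F\tp_{n+1}+F_n)u_n-u_n(F_n+F\tp_{n-1})$ used earlier for \ref{dP12}; concretely I anticipate the combination to involve $G_{n\pm1},G_n$ and the transposes $G\tp_{n\pm1},G\tp_n$. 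Since $G_m=0$ for all $m$ forces $G\tp_m=0$ as well, the right-hand side again vanishes on the constraint surface, giving consistency.

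The main obstacle is the bookkeeping in the \ref{VL2} case. Unlike \ref{VL1}, where the telescoping reduces everything to a single pre-established identity, for \ref{VL2} one must carry out the full differentiation of the four-point expression (\ref{Gn2}), which mixes $u_k$ and $u\tp_k$ under $x$-dependent coefficients; the delicate point is to track the involution correctly through every product so that the emerging $x$-dependent and constant terms reorganise exactly into left and right multiples of $G_m$ and $G\tp_m$. I would handle this by differentiating term by term, repeatedly substituting the two forms of $u_{n,x}$, and grouping each occurrence of a shifted copy of the constraint by the factor $u_k$ or $u\tp_k$ that multiplies it; the coefficients $2\mu x$ and the half-integer shifts $n\mp\tfrac12,\tfrac32$ appearing in (\ref{Gn2}) are precisely what make the non-$x$ terms telescope and close up.
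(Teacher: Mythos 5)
Your VL$^1$ half is correct and is actually a cleaner route than the paper's. The paper proves the identity $G_{n,x}=G_{n+1}u_{n+1}+u_{n+1}G_n-G_nu_n-u_nG_{n-1}$ by a fresh direct computation; you instead obtain the same identity for free by telescoping: since $G_n=F_{n+1}-F_n$ (with the nonlocality eliminated via $s_{n+1}-s_n=u_{n+1}$ and $\varepsilon_0$ cancelling, as you note), the already-verified relation $F_{n,x}=(F_{n+1}-F_n)u_n+u_n(F_n-F_{n-1})=G_nu_n+u_nG_{n-1}$ differentiates under the shift to give exactly the paper's identity, so consistency follows with no new expansion. What this buys is a structural explanation of why the $G_n$ identity holds, at the cost of depending on the $F_n$ identity, which the paper itself only asserts by ``one can check''; within the paper's context that is a legitimate dependence.

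The VL$^2$ half, however, is a plan rather than a proof, and your anticipated template is not quite right. You model the expected identity on the dP$_1^2$ relation $F_{n,x}=(F\tp_{n+1}+F_n)u_n-u_n(F_n+F\tp_{n-1})$, i.e.\ a mix of left and right $u$-multiples, but the identity that actually closes for the left-hand side $G_n$ of (\ref{Gn2}) is
\[
 G_{n,x}=u\tp_{n+1}(G\tp_{n+1}+G_n)-u_n(G_n+G\tp_{n-1}),
\]
in which every multiplier acts from the \emph{left}, one of them being $u\tp_{n+1}$ rather than $u_{n+1}$, and there are no right multiples at all. Your consistency argument is fine once such an identity is in hand ($G_m=0$ for all $m$ forces $G\tp_m=0$ by the involution, so the right-hand side vanishes on the constraint surface), and your differentiation rules $u_{n,x}=u\tp_{n+1}u_n-u_nu\tp_{n-1}$ and $u\tp_{n,x}=u\tp_nu_{n+1}-u_{n-1}u\tp_n$ are correct; but since you neither carry out the regrouping nor state the correct closing combination, the VL$^2$ case remains an execution gap --- a routine one, since the computation does close as above, matching the paper's asserted identity.
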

\begin{proof}
Let $G_n$ be the left-hand side of (\ref{Gn1}). A direct computation proves that its derivative in virtue of \ref{VL1} satisfies the identity $G_{n,x}=G_{n+1}u_{n+1}+u_{n+1}G_n-G_nu_n-u_nG_{n-1}$, which implies that the equality $G_n=0$ defines a constraint for this lattice equation. Similarly, if $G_n$ is the left-hand side of (\ref{Gn2}) then differentiation due to \ref{VL2} leads to the identity $G_{n,x}=u\tp_{n+1}(G\tp_{n+1}+G_n)-u_n(G_n+G\tp_{n-1})$.
\end{proof}

To obtain the Painlev\'e equations, we have to reduce the order by one more unit. It is far from immediately clear whether this can be done at all. In the scalar case, it was shown \cite{Adler_Shabat_2019} that equation (\ref{Gn1}) admits the integrating factor $xu_{n+1}+xu_n+n-\nu+\tfrac{1}{2}$. After multiplying by it, the equation is reduced to the form $H_{n+1}-H_n=0$, where $H_n$ is a cubic polynomial in $u_n$. The constraint $H_n=\const$ takes the form of the discrete Painlev\'e equation dP$_{34}$ \cite{Grammaticos_Ramani_2014}
\begin{equation}\label{dP34}\tag*{\mbox{dP$_{34}$}}
 (z_{n+1}+z_n)(z_n+z_{n-1})= 4x\frac{\mu z^2_n+2(-1)^n\varepsilon z_n+\delta}{z_n-n+\nu}
\end{equation}
after the change
\begin{equation}\label{zu}
 z_n=2xu_n+n-\nu.
\end{equation}
Moreover, the $x$-evolution is governed by continuous Painlev\'e equations. If $\mu\ne0$ then the function $y(x)=1-4\mu x/(z_{n+1}(x)+z_n(x))$ satisfies the P$_5$ equation, for any $n$, and if $\mu=0$ then the function $y(\xi)=(z_{n+1}(x)+z_n(x))/(2\xi)$ with $x=\xi^2$ satisfies the P$_3$ equation.

Our goal is to generalize these results to the non-commutative case. It is convenient to work directly with variables (\ref{zu}), for which the \ref{VL1} and \ref{VL2} equations take, respectively, the form
\begin{equation}\label{VL1z}
 2xz_{n,x}=z_{n+1}(z_n-n+\nu)-(z_n-n+\nu)z_{n-1}
\end{equation}
and
\begin{equation}\label{VL2z}
 2xz_{n,x}=z\tp_{n+1}(z_n-n+\nu)-(z_n-n+\nu)z\tp_{n-1}.
\end{equation}
The non-Abelian analogs of \ref{dP34} are given in the following statement. In contrast to the scalar situation, the cases $\mu\ne0$ and $\mu=0$ must be considered separately. If all variables are scalars then each of four equations coincides with \ref{dP34} up to coefficients.

\begin{proposition}
For $\mu\ne0$, let $\sigma=\varepsilon/\mu$ and let $\omega\in\mathbb C$ be an arbitrary constant. Then equation (\ref{Gn1}) admits the partial first integral consistent with  (\ref{VL1z}):
\begin{equation}\label{dP341}\tag*{\mbox{dP$_{34}^1$}}
 (z_{n-1}+z_n)(z_n+(-1)^n\sigma+\omega)^{-1}(z_n+z_{n+1})=4\mu x(z_n-n+\nu)^{-1}(z_n+(-1)^n\sigma-\omega),
\end{equation}
and equation (\ref{Gn2}) admits the partial first integral consistent with (\ref{VL2z}):
\begin{equation}\label{dP342}\tag*{\mbox{dP$_{34}^2$}}
 (z\tp_{n-1}+z_n)(z_n+(-1)^n(\sigma-\omega))^{-1}(z_n+z\tp_{n+1})
  =4\mu x(z_n-n+\nu)^{-1}(z_n+(-1)^n(\sigma+\omega)).
\end{equation}
For $\mu=0$, let $\delta\in\mathbb C$ be an arbitrary constant, then equation (\ref{Gn1}) admits the alternating partial first integral consistent with (\ref{VL1z}):
\begin{equation}\label{dP3410}\tag*{\mbox{d$\widetilde{\rm P}_{34}^1$}}
\left\{\begin{array}{ll}
 (z_{n+1}+z_n)(z_n-n+\nu)(z_n+z_{n-1})=4x( 2\varepsilon z_n+\delta), & n=2k,\\
 (z_n+z_{n-1})(z_{n+1}+z_n)(z_n-n+\nu)=4x(-2\varepsilon z_n+\delta), & n=2k+1,
\end{array}\right.
\end{equation}
and equation (\ref{Gn2}) admits the partial first integral consistent with (\ref{VL2z}):
\begin{equation}\label{dP3420}\tag*{\mbox{d$\widetilde{\rm P}_{34}^2$}}
 (z\tp_{n+1}+z_n)(z_n-n+\nu)(z_n+z\tp_{n-1})=4x(2(-1)^n\varepsilon z_n+\delta).
\end{equation}
\end{proposition}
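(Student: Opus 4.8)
The plan is to treat the four stated relations as posited first integrals and to verify each by two independent direct computations: that the relation \emph{propagates} under the discrete shift $n\mapsto n+1$ modulo the four-point constraint ((\ref{Gn1}) or (\ref{Gn2})), so that it is a genuine partial first integral; and that its $x$-derivative vanishes modulo the relation when computed in virtue of the lattice ((\ref{VL1z}) or (\ref{VL2z})), so that the newly introduced scalar constants $\omega$ and $\delta$ are $x$-independent and the reduction is compatible with the continuous flow. The correct non-commutative form of each relation is dictated by the scalar reduction of \cite{Adler_Shabat_2019}: cross-multiplying \ref{dP341} in the scalar limit reproduces \ref{dP34} with $\delta=\varepsilon^2/\mu-\mu\omega^2$, because the numerator quadratic factors as $\mu z^2+2(-1)^n\varepsilon z+\delta=\mu\bigl(z+(-1)^n\sigma+\omega\bigr)\bigl(z+(-1)^n\sigma-\omega\bigr)$ with $\sigma=\varepsilon/\mu$. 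This factorization is the whole point: in the non-Abelian setting the quadratic cannot be kept as a single block, but its two linear factors can be distributed to opposite sides of the central inverse (a quasideterminant-type object), with the scalar, hence central, constant $\omega$ playing the role of the integration constant. This is the precise sense in which the integral is only partial: a matrix degree of freedom of the four-point equation is frozen to the scalar value $\omega$.

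First I would establish the discrete first-integral property for $\mu\ne0$. Let $P_n=0$ denote \ref{dP341} with the right-hand inverse cleared, i.e. left-multiplied by $z_n-n+\nu$, which leaves only the central inverse $(z_n+(-1)^n\sigma+\omega)^{-1}$. The claim to check is that $P_n=0$ propagates: the relation at index $n$, together with $G_n=0$, forces it at index $n+1$; concretely I would use (\ref{Gn1}) to eliminate the top variable $z_{n+2}$ and verify that the index-shifted relation collapses onto left/right multiples of $P_n$. The scalar shadow of this mechanism is the exact telescoping $\mathcal I_n G_n=H_{n+1}-H_n$ of \cite{Adler_Shabat_2019}, with integrating factor $\mathcal I_n=\tfrac12(z_{n+1}+z_n)$ and $H_n$ cubic; in the non-Abelian case a clean matrix telescoping is obstructed, and it is exactly the centrality of $\omega$ that lets the propagation close. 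The \ref{VL2} case \ref{dP342} is treated in the same way, with the transpose bookkeeping inherited from the identity $G_{n,x}=u\tp_{n+1}(G\tp_{n+1}+G_n)-u_n(G_n+G\tp_{n-1})$ of the preceding proposition, and with the central factor now carrying $(-1)^n(\sigma-\omega)$ on one side and $(-1)^n(\sigma+\omega)$ on the other.

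Consistency with the continuous flow I would verify by differentiating the cleared relation $P_n=0$ in virtue of (\ref{VL1z}), respectively (\ref{VL2z}). Since $2xz_{n,x}=z_{n+1}(z_n-n+\nu)-(z_n-n+\nu)z_{n-1}$ expresses every $x$-derivative as an expression linear in the neighbouring $z$'s, the $x$-derivatives of the building blocks $z_{n-1}+z_n$, $z_n+(-1)^n\sigma+\omega$ and $z_n+z_{n+1}$ are again linear, so $\partial_x P_n$ organizes into terms each carrying a factor of $P_n$ on the left or on the right, and therefore vanishes on the solution manifold. This is the exact analog, one order lower, of the consistency identities $G_{n,x}=\dots$ already established, and it is what shows that $\omega$ and $\delta$ do not drift in $x$. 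The explicit factor of $x$ on the right-hand side is needed to balance the explicit $x$-weight carried by the $1/x$ in $z_{n,x}$: the terms produced by differentiating this factor cancel, modulo the relation, against those coming from the left-hand side.

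The main obstacle is the degenerate case $\mu=0$, and in particular the alternating integral \ref{dP3410} for the \ref{VL1} lattice. When $\mu=0$ the factorization above degenerates---there is no finite $\sigma$---and the surviving integration constant is the additive $\delta$; more seriously, the symmetric arrangement of factors breaks down, and the direct computation forces the integrating factor onto the left for one parity of $n$ and onto the right for the other. This is exactly why \ref{dP3410} must be written as two genuinely different cyclic orderings of $z_{n+1}+z_n$, $z_n-n+\nu$ and $z_n+z_{n-1}$ according to whether $n$ is even or odd. Checking that these two parity cases patch into a single propagation against $G_n$ is the delicate step. By contrast, the \ref{VL2} analog \ref{dP3420} is parity-uniform, because the involution acting on the odd-site variables already redistributes the factors; there the verification is essentially a transcription of the $\mu\ne0$ argument with $\mu$ set to zero and $\delta$ as the sole integration constant.
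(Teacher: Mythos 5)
Your plan is correct in substance, but it is precisely the route the paper names and then sidesteps: the paper states that ``the statement can be proved by direct, but rather tedious computations'' and, for $\mu\ne0$, instead gives a conceptual proof via the isomonodromic Lax pair. Since $\kappa(\lambda)=\lambda^2-2\mu\lambda$ vanishes at $\lambda=2\mu$, the representations (\ref{VL1.Llambda})--(\ref{Ulambda}) degenerate there to $V_{n+1}L_n=L_nV_n$ (resp.\ $V_{n+1}L_n=-L_nV\tp_n$) and $V_{n,x}=[U_n,V_n]$; the quasi-determinant $\Delta_n=b_n-a_nc_n^{-1}d_n$ of $V_n|_{\lambda=2\mu}$ then obeys $\Delta_{n+1}=f_n\Delta_ng_n$ (or $f_n\Delta\tp_ng_n$) and $\Delta_{n,x}=(\,\cdot\,)\Delta_n-\Delta_n(\,\cdot\,)$, so $\Delta_n=0$ is consistent with both dynamics, and modulo (\ref{Gn1})/(\ref{Gn2}) and the change (\ref{zu}) it is exactly \ref{dP341}/\ref{dP342}. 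That argument explains \emph{structurally} why the scalar quadratic splits into two linear factors placed around a central inverse with a central constant $\omega$ --- the point your factorization $\delta=\varepsilon^2/\mu-\mu\omega^2$ reaches heuristically --- and why $\mu=0$ is exceptional: the nonzero root of $\kappa$ disappears, and there the paper, like you, falls back on direct computation (with the alternating orderings in \ref{dP3410} arising as you predict). Your uniform brute-force scheme trades that insight for heavier verification; it does go through, with one repair: your claim that $\partial_xP_n$ ``organizes into terms each carrying a factor of $P_n$'' cannot hold literally, since (\ref{VL1z}) injects $z_{n\pm2}$ into $\partial_xP_n$, so the vanishing is only modulo the four-point constraints $G_{n-1}=G_n=0$ (equivalently the shifted relations $P_{n\pm1}=0$ supplied by your discrete step), just as the paper's $\Delta$-calculus presupposes the full constrained Lax equations.
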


\begin{proof}
The statement can be proved by direct, but rather tedious computations. In the case $\mu\ne0$, a more conceptual proof can be obtained by use of representations (\ref{VL1.Llambda})--(\ref{Ulambda}). We have $\kappa(\lambda)=\lambda^2-2\mu\lambda$, therefore for $\lambda=2\mu$ we have the equation $V_{n+1}L_n=L_nV_n$ for the case of \ref{VL1} or $V_{n+1}L_n=-L_nV\tp_n$ for the case of \ref{VL2}, and, in both cases, $V_{n,x}=[U_n,V_n]$. From here it is not difficult to prove, in a general form, that vanishing of a quasi-determinant $\Delta_n$ of the matrix $V_n|_{\lambda=2\mu}$ defines a constraint which is consistent both with continuous and discrete dynamics. For instance, the equation
\[
  V_x=[U,V],\quad V=\begin{pmatrix} a & b\\ c & d\end{pmatrix},\quad U=\begin{pmatrix} p & q\\ r & s\end{pmatrix},
\]
implies the following relation for $\Delta=b-ac^{-1}d$:
\[
 \Delta_x=(p-ac^{-1}r)\Delta-\Delta(s-rc^{-1}d) \quad\Rightarrow\quad \Delta_x|_{\Delta=0}=0;
\]
in a similar way, the discrete equations imply the relations of the form $\Delta_{n+1}=f_n\Delta_ng_n$ or $\Delta_{n+1}=f_n\Delta\tp_ng_n$ with some factors $f_n,g_n$. It is easy to verify that equations \ref{dP341} and \ref{dP342} are nothing but the equality $\Delta_n=0$ for the respective matrices $V_n|_{\lambda=2\mu}$, simplified modulo equations (\ref{Gn1}) or (\ref{Gn2}) and the change (\ref{zu}).
\end{proof}

In conclusion, we write down the systems arising from the lattices (\ref{VL1z}) and (\ref{VL2z}) under the above constraints. Each of these sytems admits the isomonodromic Lax pair (\ref{Ulambda}) with the matrices constructed from the matrices from Section \ref{s:zcr}.

For the lattice (\ref{VL1z}) and the constraint \ref{dP341}, the variables $q=z_n$ and $p=z_n+z_{n+1}$ satisfy, for any $n$, the system of the form
\begin{equation}\label{P51.pq}
\left\{\begin{array}{l}
 2xq_x=p(q-n+\nu)-4\mu x(q+\alpha)p^{-1}(q+\beta),\\
 2xp_x=pq+qp+p-p^2+4\mu x(p-2q-\alpha-\beta),
\end{array}\right.
\end{equation}
where $\alpha=(-1)^n\sigma-\omega$ and $\beta=(-1)^n\sigma+\omega$. For the lattice (\ref{VL2z}) and the constraint \ref{dP342}, the variables $q=z_n$ and $p=z_n+z\tp_{n+1}$ satisfy an almost identical system (cf. with the pair (\ref{P41.pq}) and (\ref{P42.pq}))
\begin{equation}\label{P52.pq}
\left\{\begin{array}{l}
 2xq_x=p(q-n+\nu)-4\mu x(q+\alpha)p^{-1}(q+\beta),\\
 2xp_x=2pq+p-p^2+4\mu x(p-2q-\alpha-\beta),
\end{array}\right.
\end{equation}
where $\alpha=(-1)^n(\sigma+\omega)$ and $\beta=(-1)^n(\sigma-\omega)$. Notice, that the system (\ref{P52.pq}) reduces to one rational equation of second order, by solving its second equation with respect to $q$ and substituting the result into the first equation (recall that in the scalar case the P$_5$ equation is satisfied by the variable $y=1-4\mu xp^{-1}$). However, this is hardly advisable, since the resulting formulas are rather cumbersome. Whether it is possible to eliminate $q$ in the system (\ref{P51.pq}) is not obvious.

In a similar way, for the lattice (\ref{VL1z}) and the constraint \ref{dP3410}, the variables $q=z_n$ and $p=z_n+z_{n+1}$ with even $n$ satisfy the system
\begin{equation}\label{P31.pq}
\left\{\begin{array}{l}
 2xq_x=p(q-n+\nu)-4xp^{-1}(2\varepsilon q+\delta),\\
 2xp_x=pq+qp+p-p^2-8\varepsilon x
\end{array}\right.
\end{equation}
(of course, a similar system can be derived also for odd $n$), and for the lattice (\ref{VL2z}) and the constraint \ref{dP3420}, the variables $q=z_n$ and $p=z_n+z\tp_{n+1}$ satisfy, for any $n$, the system
\begin{equation}\label{P32.pq}
\left\{\begin{array}{l}
 2xq_x=p(q-n+\nu)-4xp^{-1}(2(-1)^n\varepsilon q+\delta),\\
 2xp_x=2pq+p-p^2-8(-1)^n\varepsilon x.
\end{array}\right.
\end{equation}
Recall that in the scalar case the P$_3$ equation is obtained for the variable $y=p/(2\xi)$, after the change $x=\xi^2$.

\subsubsection*{Acknowledgements}

I am grateful to V.V.~Sokolov for many stimulating discussions. This research was carried out under the State Assignment 0033-2019-0004 (Quantum field theory) of the Ministry of Science and Higher Education of the Russian Federation.



\begin{thebibliography}{99}

\bibitem{Salle_1982} M.A. Salle. 
 Darboux transformations for nonabelian and nonlocal equations of the Toda lattice type. 
 {\em Theor. Math. Phys. \bf 53:2} (1982) 
 \href{http://dx.doi.org/10.1007/BF01016678}{227--237}.
 
\bibitem{Balandin_Sokolov_1998} S.P. Balandin, V.V. Sokolov. 
 On the Painlev\'e test for non-Abelian equations. 
 {\em Phys. Lett. A \bf 246:3--4} (1998) 
 \href{http://dx.doi.org/10.1016/S0375-9601(98)00336-3}{267--272}.
 
\bibitem{Its_Kitaev_Fokas_1990} A.R. Its, A.V. Kitaev, A.S. Fokas.
 The isomonodromy approach in the theory of two-dimensional quantum gravitation. 
 {\em Russ. Math. Surveys \bf 45:6} (1990)
 \href{http://dx.doi.org/10.1070/RM1990v045n06ABEH002699}{155--157}.

\bibitem{Fokas_Its_Kitaev_1991} A.S. Fokas, A.R. Its, A.V. Kitaev.
 Discrete Painlev\'e equations and their appearance in quantum gravity.
 {\em Commun. Math. Phys. \bf 142} (1991) 
 \href{https://doi.org/10.1007/BF02102066}{313--344}.

\bibitem{Grammaticos_Ramani_1998} B. Grammaticos, A. Ramani. 
 From continuous Painlev\'e IV to the asymmetric discrete Painlev\'e I. 
 {\em J. Phys. A \bf 31:27} (1998) 
 \href{http://dx.doi.org/10.1088/0305-4470/31/27/009}{5787--5798}.

\bibitem{Grammaticos_Ramani_2014} B. Grammaticos, A. Ramani. 
 Discrete Painlev\'e equations: an integrability paradigm. 
 {\em Physica Scripta \bfseries 89} (2014) 
 \href{https://doi.org/10.1088/0031-8949/89/03/038002}{038002}. 

\bibitem{Gordoa_Pickering_Zhu_2013} P.R. Gordoa, A. Pickering, Z.N. Zhu. 
 B\"acklund transformation of matrix equations and a discrete matrix first Painlev\'e equation. 
 {\em Phys. Lett. A \bf 377} (2013) 
 \href{http://dx.doi.org/10.1016/j.physleta.2013.03.032}{1345--1349}.
 
\bibitem{Cassatella-Contra_Manas_2012} G.A. Cassatella-Contra, M. Ma\~nas. 
 Riemann--Hilbert problems, matrix orthogonal polynomals and discrete matrix equations with singularity confinement. 
 {\em Stud. Appl. Math. \bf 128} (2012) 
 \href{https://doi.org/10.1111/j.1467-9590.2011.00541.x}{252--274}.

\bibitem{Cassatella-Contra_Manas_Tempesta_2018} G.A. Cassatella-Contra, M. Ma\~nas, P. Tempesta. 
 Singularity confinement for matrix discrete Painlev\'e equations.
 {\em Nonlinearity \bf 27:9} (2018) 
 \href{https://doi.org/10.1088/0951-7715/27/9/2321}{2321--2336}.
 
\bibitem{Adler_Shabat_2019} V.E. Adler, A.B. Shabat.
 Some exact solutions of the Volterra lattice. 
 {\em Theoret. Math. Phys. \bf 201:1} (2019) 
 \href{https://doi.org/10.1134/S0040577919100039}{1442--1456}.

\bibitem{Bogoyavlensky_1991a} O.I. Bogoyavlensky.
 Algebraic constructions of integrable dynamical systems --- extensions of the Volterra system. 
 {\em Russ. Math. Surveys \bf 46:3} (1991)
 \href{http://dx.doi.org/10.1070/RM1991v046n03ABEH002801}{1--64}.
 
\bibitem{Casati_Wang_2020} M. Casati, J.P. Wang. 
 Recursion and Hamiltonian operators for integrable nonabelian difference equations.
 \url{https://arxiv.org/abs/1910.06807}.

\bibitem{Adler_Svinolupov_Yamilov_1999} V.E. Adler, S.I. Svinolupov, R.I. Yamilov.
 Multi-component Volterra and Toda type integrable equations. 
 {\em Phys. Lett. A \bf 254} (1999)
 \href{http://dx.doi.org/10.1016/S0375-9601(99)00087-0}{24--36}.
 
\bibitem{Kupershmidt_2000} B.A. Kupershmidt. 
 KP or mKP. Noncommutative mathematics of Lagrangian, Hamiltonian, and integrable systems. 
 {\em Math. Surveys and Monographs \bf 78}, Providence, RI: AMS, 2000.

\bibitem{Cherdantsev_Yamilov_1995} I.Yu. Cherdantsev, R.I. Yamilov.
 Master symmetries for differential-difference equations of the Volterra type.
 {\em Physica D \bf 87:1--4} (1995)
 \href{http://dx.doi.org/10.1016/0167-2789(95)00167-3}{140--144}.
 
\bibitem{Adler_Shabat_Yamilov_2000} V.E. Adler, A.B. Shabat, R.I. Yamilov.
 Symmetry approach to the integrability problem.
 {\em Theor. Math. Phys. \bfseries 125:3} (2000)
 \href{http://dx.doi.org/10.1023/A:1026602012111}{1603--1661}.
 
\bibitem{Adler_Sokolov_2020} V.E. Adler, V.V. Sokolov. 
 Non-Abelian evolution systems with conservation laws.
 \href{https://arxiv.org/abs/2008.09174}{arXiv:2008.09174}.

\bibitem{Levi_1981} D. Levi. 
 Nonlinear differential difference equations as B\"acklund transformations.
 {\em J. Phys. A \bf 14:5} (1981) 
 \href{http://dx.doi.org/10.1088/0305-4470/14/5/028}{1083--1098}.

\bibitem{Shabat_Yamilov_1991} A.B. Shabat, R.I. Yamilov. 
 Symmetries of nonlinear chains. {\em Leningrad Math. J. \bf 2:2} (1991) 377--399.

\bibitem{Garifullin_Mikhailov_Yamilov_2014} R.N. Garifullin, A.V. Mikhailov, R.I. Yamilov. 
 Discrete equation on a square lattice with a nonstandard structure of generalized symmetries.
 {\em Theor. Math. Phys. \bf 180:1} (2014)
 \href{http://dx.doi.org/10.1007/s11232-014-0178-6}{765--780}.
\end{thebibliography}
\end{document}